\newtheorem{theorem}{Theorem}
\newtheorem{lemma}{Lemma}
\newtheorem{definition}{Definition}
\newtheorem{remark}{Remark}
\newcommand{\Add}{\mathsf{Add}}
\newcommand{\NOT}{\mathsf{NOT}}
\newcommand{\CARRY}{\mathsf{CARRY}}
\newcommand{\CNOT}{\mathsf{CNOT}}
\newcommand{\CCNOT}{\mathsf{CCNOT}}
\newcommand{\Xg}{\mathsf{X}}
\newcommand{\ladder}{\textsc{Ladder}}
\newcommand{\Lg}{\mathsf{L}}
\newcommand{\ie}{i.e.}
\newcommand{\etal}{et \; al.}
\newcommand*{\eqdef}{\stackrel{\text{def}}{=}}
\newcommand{\floor}[1]{\left\lfloor #1 \right\rfloor}
\newcommand{\ints}[1]{[\![ #1 ]\!]}
\newcommand{\Rbra}[1]{\left( #1 \right)}
\newcommand{\bigO}[1]{O\Rbra{#1}}
\newcommand{\hw}[1]{\omega\Rbra{#1}}
\begin{document}

    \title{Quantum adders: on the structural link between the ripple-carry and carry-lookahead techniques}
    \author[1]{Maxime Remaud}
    \orcid{0009-0008-1597-3661}
    \affil[1]{Eviden Quantum Lab, Les Clayes-sous-Bois, France}
    
    \maketitle
    
    \begin{abstract}
        This paper is motivated by two key observations. 
        First, Toffoli ladders can be implemented in three distinct ways: with linear or polylogarithmic depth using no ancilla, or with logarithmic depth using ancilla qubits.
        Second, two fundamental structural approaches to designing addition algorithms can be identified in several well-known quantum adders. At their core is the Toffoli ladder, and both provide a clear and simple connection between ripple-carry and carry-lookahead adder designs.
        Combining these two structures with the three Toffoli ladder implementations yields six quantum adders: four are well-known and two novel. Notably, one of the novel designs is a carry-lookahead adder that outperforms previous approaches.
    \end{abstract}
    
    \section{Introduction}

Efficient arithmetic operations lie at the heart of both classical and quantum computing, with addition being one of the most fundamental. As quantum computing continues to mature, the design and optimization of quantum arithmetic circuits, particularly quantum adders, plays a critical role in enabling more complex algorithms such as cryptanalytic algorithms \cite{Kup05,HJN20}, quantum machine learning \cite{SCC24}, or even quantum chemistry \cite{CWM12}. Over the past three decades, various quantum adder architectures have been proposed, each with different trade-offs in terms of circuit depth, ancilla usage, gate count, and error resilience.

Three major families of quantum adders have emerged: quantum ripple-carry, quantum carry-lookahead, and QFT-based. The latter leverages the quantum Fourier transform, central to many quantum algorithms, to perform addition in the frequency domain. Introduced by Draper \cite{Dra02}, QFT-based addition requires higher gate precision and is more susceptible to errors introduced by phase rotations \cite{HRS17}. This method differs from those discussed in this paper, as it uses Hadamard gates and controlled rotations; hereafter, we will focus on adders using classical logic only.

The second family is that of quantum ripple-carry addition, which was first introduced by Vedral $\etal$ \cite{VBE96}. While simple and space-efficient, its linear depth becomes a limiting factor for large inputs, especially in the context of fault-tolerant quantum computing. To address the depth bottleneck, quantum carry-lookahead adders were proposed, starting with Draper et al. \cite{DKR06}. These adders reduce the circuit depth down to logarithmic, at the cost of using more workspace and losing the nearest-neighbor connectivity that ripple-carry adders have. These adders are particularly attractive for near-term architectures where minimizing coherence time is critical.

Until now, quantum ripple-carry and quantum carry-lookahead adders have generally been treated as fundamentally distinct approaches in quantum circuit design, primarily due to a lack of understanding regarding their underlying connections. Ripple-carry addition has been viewed as a simpler, sequential architecture, while carry-lookahead addition is seen as a more complex but highly parallelizable alternative. This perceived separation stems from limited insights into how the two models might be derived from a common framework. As a result, research has typically focused on comparing their characteristics rather than exploring potential structural or conceptual unification. 

However, in a work presented earlier this year by Remaud and Vandaele \cite{RV25}, a new method for addition was introduced, based on a new way to implement ladders of Toffoli gates. This technique does not require ancillary qubits, just like the ripple-carry technique, and has sublinear depth, just like the carry-lookahead technique. All this comes at the cost of using an increased number of gates. A Venn diagram is provided in Figure~\ref{fig:venn} to visualize the different properties of these different techniques.

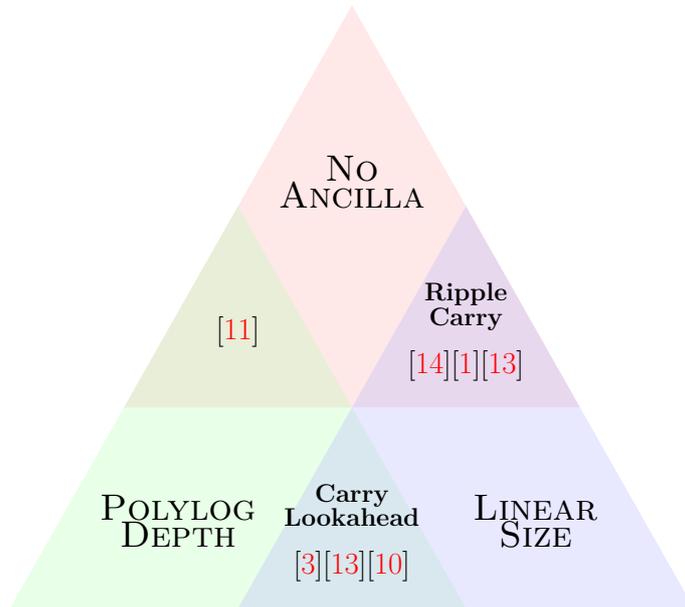
\begin{figure}[H]
    \centering
    \resizebox{0.6\columnwidth}{!}{%
    \begin{tikzpicture}[thick, every text node part/.style={align=center}]
      \fill[red!30, opacity=0.3] (-4.5,0) -- (4.5,0) -- (0,8) -- cycle;
      \node at (0,4.5) {\huge \textsc{No} \\ \huge \textsc{Ancilla}};
    
      \fill[green!30, opacity=0.3] (-2.25,4) -- (-6.75,-4) -- (2.25,-4) -- cycle;
      \node at (-1.75,-2.25) [anchor=east] {\huge \textsc{Polylog} \\ \huge \textsc{Depth}};
    
      \fill[blue!30, opacity=0.3] (2.25,4) -- (-2.25,-4) -- (6.75,-4) -- cycle;
      \node at (2.25,-2.25) [anchor=west] {\huge \textsc{Linear} \\ \huge \textsc{Size}};
    
      \node at (-2.25,1.5) {\LARGE \cite{RV25}};
      \node at (2.25,1.5) {\Large \textbf{Ripple} \\ \Large \textbf{Carry} \\ \\ \LARGE \cite{VBE96}\cite{CDKM04}\cite{TTK10}};
      \node at (0,-2.5) {\Large \textbf{Carry} \\ \Large \textbf{Lookahead} \\ \\ \LARGE \cite{DKR06}\cite{TTK10}\cite{Mog19}};
    
    \end{tikzpicture}}
    \caption{Venn diagram of in-place quantum reversible adders with classical logic only.}\label{fig:venn}
\end{figure}

Each of these types of adder offers unique advantages and limitations, and their applicability often depends on the broader algorithmic and architectural context. Table \ref{table:AllAdders} provides an overview of the complexities of different algorithms that have been proposed to implement in-place addition.

\begin{table}[H]
    \centering
    \caption{Asymptotic complexity of in-place quantum reversible adders with classical logic only.}\label{table:AllAdders}
    \begin{tabular}{|c|c|c|c|c|} \hline
        Paper           & Toffoli count         & Toffoli Depth         & Ancilla   \\ \hline
        \cite{VBE96}    & $4n-2$                & $3n-1$                & $n$ \\
        \cite{CDKM04}   & $2n-1$                & $2n-1$                & 1 \\
        \cite{TTK10}    & $2n-1$                & $2n-1$                & 0 \\ \hline
        \cite{RV25}     & $\bigO{n \log n}$     & $\bigO{\log^2 n}$     & 0 \\ \hline
        \cite{TTK10}    & $14n + \Theta(1)$      & $18 \log n + \Theta(1)$           & $3 n / \log n  + \Theta(1)$ \\
        \cite{Mog19}    & $12n + \Theta(\log n)$ & $10 \log n + \Theta(1)$& $n-1$ \\
        \cite{DKR06}    & $10n - \Theta(\log n)$ & $4 \log n + \Theta(1)$ & $2n - \Theta(\log n)$ \\
        This paper      & $8n - \Theta(\log n)$ & $4 \log n + \Theta(1)$ & $n - \Theta(\log n)$ \\ \hline
    \end{tabular}
\end{table}

\textbf{Our contributions.} We give in Section~\ref{sec:preli} preliminaries and notation, before technical details in the subsequent sections.
\begin{itemize}
    \item In Section~\ref{sec:ladders}, we take a closer look at how ladders of Toffoli gates can be implemented. Currently, there are three distinct implementations: the first is naive, with linear depth, the second, proposed by \cite{RV25}, has polylogarithmic depth, and the third has logarithmic depth and was implicitly used by \cite{DKR06}. 
    \item In Section~\ref{sec:CLA}, we show that there are two main structures for performing addition which are shared by several existing adders, and within which the main subroutine is the Toffoli ladder. The "original structure" (implicitly used by \cite{VBE96} and \cite{DKR06}) uses a linear number of ancilla qubits, while the second, the "space-optimized structure" (implicitly used by \cite{TTK10} and \cite{RV25}), does not use any. Based on this observation, we note that it is possible to design a new adder by embedding the logarithmic depth implementation of the Toffoli ladder in the second structure.
\end{itemize}

    \section{Preliminaries} \label{sec:preli}

We recall the definitions of the operators discussed in this paper: ladders and adders.

\subsection{Ladders}

We begin with the definition of the $\CNOT$ ladder \cite{RV25}.

\begin{definition}
    Let $x_i \in \{0,1\} \; \forall i \in [\![0,n]\!]$ and $X$ denote the quantum register $\bigotimes_{i=0}^{n} \ket{x_i}$. We define $\ladder_1$ on $n+1$ qubits as the operator $\Lg_1^{(n)}$ with the following action:
    \begin{equation*}\label{eq:ladder1}
        \Lg_1^{(n)} \Rbra{X}  \eqdef \ket{x_0} \otimes \Rbra{\bigotimes_{i=1}^{n} \ket{x_i \oplus x_{i-1}}}
    \end{equation*}
\end{definition}

We also recall the definition of the Toffoli ladder \cite{RV25}.

\begin{definition}
    Let $x_i \in \{0,1\} \; \forall i \in [\![0,n]\!]$ and $y_i \in \{0,1\} \; \forall i \in [\![0,n-1]\!]$. Let $X$ and $Y$, respectively, denote the quantum registers $\bigotimes_{i=0}^{n} \ket{x_i}$ and $\bigotimes_{i=0}^{n-1} \ket{y_i}$. We define $\ladder_2$ on $2n+1$ qubits as the operator $\Lg_2^{(n)}$ with the following action:
    \begin{equation*}\label{eq:ladder2}
        \Lg_2^{(n)} \Rbra{X,Y} \eqdef \ket{x_0} \otimes \Rbra{\bigotimes_{i=0}^{n-1} \ket{x_{i+1} \oplus x_i y_i}} \otimes Y
    \end{equation*}
\end{definition}

\subsection{Addition}

We will work with two $n$-bit numbers denoted $a$ and $b$. We compute the addition in-place, meaning that we want an operator $\Add_n$ with the following action:
\begin{equation*}
	\ket{a} \ket{b} \ket{z} \xmapsto{\Add_n} \ket{a} \ket{a + b \mod 2^n} \ket{z \oplus (a + b)_n}
\end{equation*}
(where $z \in \{0,1\}$) using only gates from the set $\{\text{Toffoli}, \CNOT, \Xg\}$.
     
Note that we consider only reversible implementations, and if we have to use ancilla qubits, they have to be reset to zero at the end of the circuit.

\subsection{Notation}

Throughout this document, $\log x$ will denote the binary logarithm of $x$.
Inside circuits, slice numbers refer to the block preceding them.
    \section{Implementation of the Toffoli ladder}\label{sec:ladders}

In this section, we give an overview of the implementations for the $\ladder_2$ operator and their complexity. 

\subsection{Linear depth}

The first implementation is the most straightforward one and gives this operator its name. It literally takes the form of a ladder of $n$ Toffoli gates to implement $\Lg_2^{(n)}$. Figure~\ref{circ:LinLad} shows the circuit resulting from this naive implementation for $n=7$.

\begin{figure}[H]
    \centering
    \scalebox{0.75}{\input{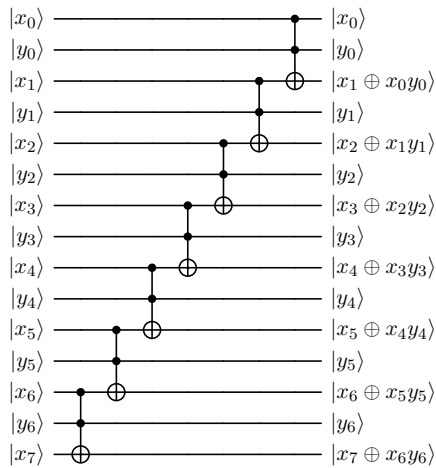}}
    \caption{Linear depth implementation of the operator $\Lg_2^{(7)}$.}\label{circ:LinLad}
\end{figure}

In a very straightforward manner, we can establish Lemma~\ref{lem:LadLin}, which gives us the complexity associated with this implementation.

\begin{lemma}\label{lem:LadLin}
    There exists a Toffoli circuit that implements $\Lg_2^{(n)}$ with a Toffoli-depth of $n$ and a Toffoli-count of $n$, without any ancilla qubit.
\end{lemma}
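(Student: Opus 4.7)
The plan is a direct verification of the circuit in Figure~\ref{circ:LinLad}: read off its structure, check that it implements $\Lg_2^{(n)}$ as defined, and count resources. There is no real obstacle here; the lemma is essentially a formalization of what the picture shows, and the authors themselves flag it as "straightforward."

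First, I would describe the circuit formally as the sequence of Toffoli gates
\[
    T_{n-1},\, T_{n-2},\, \ldots,\, T_0,
\]
applied in that order, where $T_i$ denotes a Toffoli with controls $x_i$ and $y_i$ and target $x_{i+1}$. This matches Figure~\ref{circ:LinLad} read from left to right. The gate count is immediately $n$, no ancilla is introduced, and the registers acted on are exactly those of $\Lg_2^{(n)}$.

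Next, I would verify correctness by a short forward induction on the number of gates already applied. When $T_i$ is about to act, the only qubits previously written to are the targets of $T_{n-1}, \ldots, T_{i+1}$, which are $x_n, x_{n-1}, \ldots, x_{i+2}$; in particular neither $x_i$ nor $y_i$ has been modified, so $T_i$ performs the assignment $x_{i+1} \mapsto x_{i+1} \oplus x_i y_i$ with the original values $x_i, y_i$. After all $n$ gates have been applied, the $y_i$ and $x_0$ wires are untouched and each $x_{i+1}$ has been updated to $x_{i+1} \oplus x_i y_i$, which is exactly the action prescribed by Definition~\ref{eq:ladder2}. (I would also note that the reverse ordering would not work, since $T_i$ would then see a modified $x_{i+1}$, which explains why the ladder is drawn from top to bottom.)

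Finally, for the depth, I would argue that consecutive gates $T_i$ and $T_{i-1}$ share the qubit $x_i$ (target of the latter, control of the former), so they cannot be parallelized; since this holds for every adjacent pair, the $n$ Toffolis form a single chain of length $n$, yielding Toffoli-depth exactly $n$. Combining this with the gate count and the absence of ancilla qubits gives the three quantities claimed in the statement.
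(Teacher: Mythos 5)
Your proposal is correct and matches the paper's intent exactly: the paper offers no explicit proof (it simply calls Lemma~\ref{lem:LadLin} ``straightforward'' and points to the circuit in Figure~\ref{circ:LinLad}), and your argument is precisely the verification it leaves implicit --- the gate ordering from $T_{n-1}$ down to $T_0$ ensures each Toffoli reads unmodified controls $x_i, y_i$, giving the action of $\Lg_2^{(n)}$ with count $n$, depth $n$ (forced by the shared wire $x_i$ between consecutive gates), and no ancillas. Nothing is missing; you have simply written out the details the paper omits.
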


\subsection{Polylogarithmic depth}

In a recent paper \cite{RV25}, it has been proven that it is possible to construct a circuit that is asymptotically much shallower, also without using ancilla qubits, at the cost of increasing the number of gates and having increased connectivity.
We give an example of decomposition for $n=7$ in Figure~\ref{circ:PolylogLad}. It should be noted that this decomposition uses the decomposition of multi-controlled $\Xg$ gates in logarithmic time \cite{KG24} as a subroutine. Thus, the figure given here as an example does not represent what would actually be implemented (the decomposition of 3- and 5-control gates would involve using a considerable number of gates) but represents what happens on a larger scale: the decomposition of the operator $\Lg_2^{(n)}$ into a circuit with $\bigO{\log n}$ time slices containing gates that can be implemented in Toffoli-depth of the order of $\bigO{\log n}$.

\begin{figure}[H]
    \centering
    \scalebox{0.75}{\input{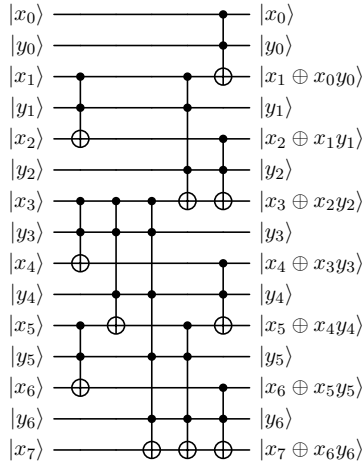}}
    \caption{Polylogarithmic depth implementation of the operator $\Lg_2^{(7)}$ \cite{RV25}.} \label{circ:PolylogLad}
\end{figure}

We reproduce in Lemma~\ref{lem:LadPolylog} the result demonstrated in \cite{RV25}, and refer to that paper for further details. 

\begin{lemma}[Lemma 4 in \cite{RV25}]\label{lem:LadPolylog}
    There exists a circuit that implements $\Lg_2^{(n)}$ over the $\{\text{Toffoli}, \Xg\}$ gate set with a depth of $\bigO{\log^2 n}$ and a gate count of $\bigO{n \log n}$, without any ancilla qubit.
\end{lemma}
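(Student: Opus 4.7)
The plan is to give a divide-and-conquer construction and bound its complexity via a straightforward recurrence. I would first unfold the sequential recursion $x_i^{\text{new}} = x_i \oplus x_{i-1}^{\text{new}} y_{i-1}$ into the closed form
\[
    x_i^{\text{new}} \;=\; x_i \;\oplus\; \bigoplus_{j=0}^{i-1} x_j \prod_{k=j}^{i-1} y_k,
\]
which exposes a prefix-sum-of-products structure and suggests that, much like classical parallel-prefix adders, one can parallelize the whole ladder by organizing the computation as a binary tree that shares common sub-products $y_a y_{a+1} \cdots y_b$ across many target indices.

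Next I would split the $x$-register into two halves and recursively apply $\Lg_2^{(n/2)}$ to each half in parallel. The only remaining dependency between the halves is that every output in the bottom half must additionally be XORed with a quantity of the form $x_{j}^{\text{new}}\!\cdot y_j y_{j+1} \cdots y_{i-1}$, with $j$ at the boundary. Each such merging term is implemented by a single multi-controlled Toffoli gate. Decomposing every $\MCX$ with $k$ controls via the $\bigO{\log k}$-depth, ancilla-free scheme of \cite{KG24} — which works by borrowing \emph{dirty} qubits from the currently idle portion of the register — gives a merging layer of depth $\bigO{\log n}$.

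For the complexity analysis, the depth satisfies $D(n) = D(n/2) + \bigO{\log n}$, which solves to $D(n) = \bigO{\log^2 n}$; and, once partial products are shared across targets, the gate count satisfies $G(n) = 2\,G(n/2) + \bigO{n}$, solving to $G(n) = \bigO{n \log n}$. The hard part, and the real technical content of \cite{RV25}, is to make the whole construction strictly ancilla-free: one must show that every intermediate partial product can be temporarily parked on an already-existing $x$-qubit and then uncomputed by a matching layer before that qubit's final value is needed. I would verify this by tracing the content of each qubit across the recursion and checking an invariant stating that at every point each $x_i$ either already holds its true intermediate value or a cleanly labelled temporary that is guaranteed to be XORed out by a specified later layer, ensuring consistency with the borrowing requirements of the \cite{KG24} $\MCX$ decomposition.
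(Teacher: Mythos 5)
The paper does not actually prove this lemma: it imports it verbatim from \cite{RV25} (``we refer to that paper for further details''), with Figure~\ref{circ:PolylogLad} and the sentence preceding it --- a decomposition into $\bigO{\log n}$ time slices of multi-controlled $\Xg$ gates, each slice implementable in Toffoli-depth $\bigO{\log n}$ via \cite{KG24} --- as the only indication of the intended construction. Measured against that construction, your proposal has one small error and one genuine gap. The small error: your closed form is the cumulative carry chain and describes $\bigl(\Lg_2^{(n)}\bigr)^\dag$ rather than $\Lg_2^{(n)}$; by Definition~2 (and the output labels in Figures~\ref{circ:LinLad} and~\ref{circ:PolylogLad}) the qubit holding $x_{i+1}$ receives $x_{i+1}\oplus x_i y_i$ with the \emph{original} $x_i$, with no nesting. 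This is harmless asymptotically (reverse the circuit), but it changes what your merge step has to compute.

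The gap is the merge step itself, which does not support the recurrences you write down. With the split at $m=n/2$, each of the $\Theta(n)$ outputs in the far half needs a correction of the form $\bigl(\bigoplus_{j<m} x_j \prod_{k=j}^{m-1} y_k\bigr)\cdot\prod_{k=m}^{i-1} y_k$. Realized as ``a single multi-controlled Toffoli per output,'' this is $\Theta(n)$ gates with up to $\Theta(n)$ controls each, whose control sets are nested and hence heavily shared (and the first factor is not resident on any single qubit). Without ancillae the shared controls cannot be fanned out, and the ancilla-free $\MCX$ decomposition of \cite{KG24} acts nontrivially on its control qubits, so these gates must be serialized: the merge layer has depth $\Omega(n)$ rather than $\bigO{\log n}$, and Toffoli count $\Theta(n^2)$ rather than $\bigO{n}$. ``Sharing partial products across targets'' is precisely the step that needs either ancillae or a different recursion, and you assert it rather than construct it; your closing invariant-checking plan is where the proof would actually have to live, and nothing in the proposal makes it go through. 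The recursion that does work --- and that Figure~\ref{circ:PolylogLad} depicts --- is on \emph{parity}, not on halves: one round of pairwise-disjoint Toffolis settles all even-indexed targets, after which the odd-indexed positions form a half-size ladder whose ``$y$-controls'' are products of two consecutive $y$'s; iterating gives $\bigO{\log n}$ rounds, round $j$ consisting of disjoint $\MCX$ gates with $2^j+\bigO{1}$ controls, hence $\bigO{n}$ Toffolis and $\bigO{\log n}$ depth per round by \cite{KG24}, with no shared-control conflict and no parked intermediate value to uncompute.
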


\subsection{Logarithmic depth}

Finally, it should be noted that there is a method with logarithmic depth for implementing $\ladder_2$, using ancilla qubits. It was used two decades ago in an article by Draper $\etal$ \cite{DKR06}, but to the best of our knowledge, it was not explicitly identified as such in that article or in subsequent works. The (dagger version of this) method is called CARRY in \cite{TTK10} and is not named in the original work by Draper $\etal$, but corresponds to the (dagger version of the) algorithm described in their Section 3 and consists of P-, G-, C- and P$^{-1}$- rounds. We give an example of this circuit in Figure~\ref{circ:LogLad} for $n=7$.

\begin{figure}[H]
    \centering
    \scalebox{0.75}{\input{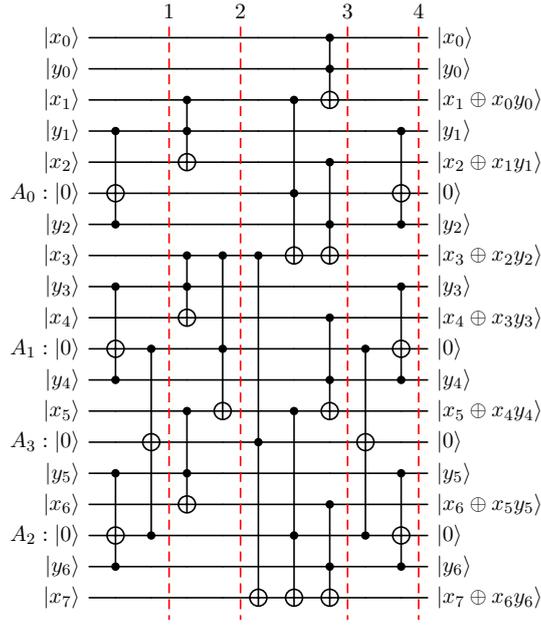}}
    \caption{Logarithmic depth implementation of the operator $\Lg_2^{(7)}$ \cite{DKR06}.} \label{circ:LogLad}
\end{figure}

We give in Algorithm~\ref{alg:pgcp} the corresponding pseudocode for any $n$, where we defined $\sigma(i) = n - i - 2\floor{\frac{n}{2^i}} - \omega(n \mod 2^i)$ to facilitate the writing of the indexes.

\begin{algorithm}[H]
	\caption{$\CARRY^\dag$ a.k.a. $\Lg_2^{(n-1)}$} \label{alg:pgcp}
	\begin{algorithmic}[1]
		\Require $\ket{a}_A \ket{b}_B$ where $a \in \{0,1\}^{n}$ and $b \in \{0,1\}^{n-1}$   
		\Ensure $\Lg_2^{(n-1)}\Rbra{A,B}$ using a register $C$ of $n - \hw{n} - \floor{\log n}$ ancilla qubits

        \For{$j = 1$ to $\floor{\frac{n}{2}} - 1$} \Comment{Slice 1}
            \State $\CCNOT( B_{2j-1}, B_{2j}, C_{j-1} )$
        \EndFor
		\For{$i = 2$ to $\floor{\log{n}} - 1$}
    		\For{$j = 1$ to $\floor{\frac{n}{2^i}} - 1$}
        		\State $\CCNOT( C_{2j + \sigma(i-1)}, C_{2j + 1 + \sigma(i-1)}, C_{j + \sigma(i)} )$
    		\EndFor
		\EndFor
        
        \For{$j = 1$ to $\floor{\frac{n - 1}{2}}$} \Comment{Slice 2}
            \State $\CCNOT( A_{2j-1}, B_{2j-1}, A_{2j} )$
        \EndFor
		\For{$i = 2$ to $\floor{\log{\frac{2n}{3}}}$}
    		\For{$j = 1$ to $\floor{\frac{n - 2^{i-1}}{2^i}}$}
        		\State $\CCNOT( A_{2^i j - 1}, C_{2j + \sigma(i-1)}, A_{2^i j + 2^{i - 1} - 1} )$
    		\EndFor
		\EndFor
        
		\For{$i = \floor{\log{n}}$ to $2$} \Comment{Slice 3}
    		\For{$j = 1$ to $\floor{\frac{n}{2^i}}$}
                \State $\CCNOT( A_{2^i j - 2^{i-1} - 1}, C_{2 j - 1 + \sigma(i-1)}, A_{2^i j  - 1} )$
    		\EndFor
		\EndFor
        \For{$j = 1$ to $\floor{\frac{n}{2}}$}
            \State $\CCNOT( A_{2j-2}, B_{2j-2}, A_{2j-1} )$
        \EndFor
		
        \State Uncompute Slice 1 \Comment{Slice 4}
	\end{algorithmic}
\end{algorithm}

The complexity of Algorithm~\ref{alg:pgcp} is given in Lemma~\ref{lem:LadLog}.

\begin{lemma}[Section 3 in \cite{DKR06}]\label{lem:LadLog}
    There exists a Toffoli circuit that implements $\Lg_2^{(n-1)}$ with a depth of $\floor{\log n} + \floor{\log \frac{n}{3}} + 3$ and a Toffoli count of $4n - 3\hw{n} - 3\floor{\log n} - 1$, with $n - \hw{n} - \floor{\log n}$ ancilla qubits.
\end{lemma}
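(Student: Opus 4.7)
The statement combines a correctness claim for Algorithm~\ref{alg:pgcp} with three arithmetic bookkeeping claims (ancilla count, Toffoli count, depth). My plan is to treat correctness separately, then reduce each of the three counts to a single arithmetic identity.

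\emph{Step 1: Correctness via the P/G/C/P$^{-1}$ structure.} I would first reinterpret $\Lg_2^{(n-1)}$ in the language of carry propagation: the output on wire $x_{i+1}$ is $x_{i+1}\oplus x_i y_i$, which is precisely the standard carry recurrence applied to a single pair $(x_i, y_i)$ at each position. The four slices of Algorithm~\ref{alg:pgcp} mirror the four rounds of Draper et al.'s carry circuit \cite{DKR06}. Slice~1 is a P-round building the pairwise product tree $y_{2j-1} y_{2j}$ into the ancilla register $C$; the nested loop extends this to longer products $y_{2^i j - 2^{i-1}} \cdots y_{2^i j - 1}$ at each level of the tree, which we would verify by induction on $i$ using the invariant that $C_{j+\sigma(i)}$ holds the product of the corresponding contiguous block of the $y$-wires. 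Slices~2 and~3 (G- and C-rounds) then propagate these block products onto the $x$-wires, combining them with the appropriate $x_{2^i j - 1}$ carries; induction on $i$ again shows that after Slice~3 each $x_{i+1}$ carries the desired update $x_i y_i$. Slice~4 undoes Slice~1 verbatim, so the ancillas return to $\ket{0}$. The correctness argument is the main conceptual content: once the invariants are stated, each CCNOT in the algorithm can be matched to exactly one step of the recurrence.

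\emph{Step 2: Ancilla count.} The register $C$ receives one fresh ancilla per CCNOT written in Slice~1. Summing over levels gives $\sum_{i=1}^{\floor{\log n}-1}(\floor{n/2^i}-1)$ new ancillas, plus $\floor{n/2^{\floor{\log n}}}$ implicit top-of-tree entries if relevant. Applying the classical identity $\sum_{i\geq 1}\floor{n/2^i} = n-\hw{n}$ and subtracting one per level collapses this to $n-\hw{n}-\floor{\log n}$, as claimed.

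\emph{Step 3: Toffoli count.} Here I would count CCNOTs slice by slice. Slice~1 contributes $\sum_{i=1}^{\floor{\log n}-1}(\floor{n/2^i}-1)$; Slices~2 and~3 are essentially mirror images and each contribute of order $\floor{(n-1)/2}+\sum_{i\geq 2}\floor{(n-2^{i-1})/2^i}$; Slice~4 matches Slice~1. Repeated use of $\sum \floor{n/2^i}=n-\hw{n}$ (and the telescoping $\sum_{i\geq 1}\floor{(n-2^{i-1})/2^i}=\floor{n/2}-\text{(small correction)}$) reduces everything to the closed form $4n-3\hw{n}-3\floor{\log n}-1$. This is mostly bookkeeping; the only subtle point is tracking the $-1$ offsets in the inner loops and matching them against the $\floor{\log n}$ term.

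\emph{Step 4: Depth.} Each slice executes in Toffoli-depth equal to the number of levels of its loop, since within one level the CCNOTs act on disjoint index blocks (which I would justify by inspecting the index formulas $2^i j - 1$, $2j+\sigma(i-1)$, etc., and checking pairwise disjointness). This gives depth $\floor{\log n}-1$ for Slice~1, $\floor{\log(2n/3)}$ for Slice~2 (dominant term, due to the range of $i$), $\floor{\log n}$ for Slice~3, and $\floor{\log n}-1$ for Slice~4. Summing and simplifying $\floor{\log(2n/3)} = \floor{\log(n/3)}+1$ yields the stated $\floor{\log n}+\floor{\log(n/3)}+3$.

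The main obstacle I anticipate is Step~1 (correctness), because one has to set up clean inductive invariants for the contents of the ancillas after each round and verify that the indexing function $\sigma(i)$ addresses the correct ancilla at each level. The three counting steps are then routine applications of a single identity about $\sum\floor{n/2^i}$.
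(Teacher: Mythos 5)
Your overall strategy --- prove correctness of Algorithm~\ref{alg:pgcp} by tree invariants, then reduce the three closed formulas to the identity $\sum_{i\ge 1}\lfloor n/2^i\rfloor = n-\omega(n)$ --- is a legitimate, self-contained route, and your ancilla and Toffoli counts do work out (I checked them against $n=8$). The genuine gap is in Step~4, the depth. You compute the depth by running the four slices sequentially and summing their level counts, $(\lfloor\log n\rfloor-1)+(\lfloor\log(n/3)\rfloor+1)+\lfloor\log n\rfloor+(\lfloor\log n\rfloor-1)=3\lfloor\log n\rfloor+\lfloor\log(n/3)\rfloor-1$, and then assert that this ``yields'' $\lfloor\log n\rfloor+\lfloor\log(n/3)\rfloor+3$. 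Those two expressions coincide only when $\lfloor\log n\rfloor=2$; for larger $n$ your sequential schedule costs roughly $4\log n$, about twice the claimed bound. The stated depth is only achievable because the P-, G- and C-rounds are \emph{pipelined}: the level-$i$ gates of one round act on qubits disjoint from the gates of adjacent levels of the neighbouring rounds, so gates from different slices can share time steps (visible in Figure~\ref{circ:LogLad}, where the second P-level touches only the ancilla register while the first G-level touches only the $x$ and $y$ wires). A correct depth argument must exhibit this interleaved schedule and verify disjointness \emph{across} rounds, not merely within one level of one round; your disjointness check as described only covers the latter.

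For comparison, the paper sidesteps all of this: it takes the three closed formulas directly from Draper et al.\ \cite{DKR06} and establishes correctness not by invariants but by a local substitution pattern that absorbs the P and P$^{-1}$ rounds into the G and C rounds as multi-controlled gates, thereby reducing the circuit to the ancilla-free polylogarithmic construction already proven correct in \cite{RV25} (supplemented by an experimental check). Your invariant-based correctness plan is more self-contained and would be a genuine addition if the invariants were written out, but the depth accounting as written does not prove the stated bound.
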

\begin{proof}
    The closed formulas come from the paper by Draper $\etal$ \cite{DKR06}.

    We have experimentally verified with Q-Pragma \cite{GK24} that this algorithm effectively implements the $\ladder_2$ operator, which can also be easily verified using the following simple substitution pattern:
    \begin{equation*}
    \scalebox{0.7}{
        \begin{quantikz}[column sep=.3cm, row sep={.7cm,between origins}]
            &           & \ctrl{2}  &       & \\
            & \ctrl{1}  &           & \ctrl{1}  & \\
            & \targ{}   & \ctrl{2}  & \targ{}   & \\
            & \ctrl{-1} &           & \ctrl{-1} & \\
            &           & \targ{}   &           &
        \end{quantikz}
        =
        \begin{quantikz}[column sep=.3cm, row sep={.7cm,between origins}]
            & \ctrl{1}  & \\
            & \ctrl{2}  & \\
            &           & \\
            & \ctrl{1}  & \\
            & \targ{}   &
        \end{quantikz}}
    \end{equation*}
    to directly incorporate the first and last rounds (P and P$^{-1}$ rounds) into the two middle rounds (C and G rounds) via multi-controlled $\Xg$ gates. The ancilla qubits can thus be discarded, bringing us back to the polylogarithmic depth construction described above.
\end{proof}
    \section{New Quantum Carry-Lookahead Adder}\label{sec:CLA}

We show here that some of the earliest quantum adders historically proposed are linked by an underlying structure, with the only subroutine differentiating them being the one used to implement the $\ladder_2$ operator.

More specifically, we can see that two different general structures have been adopted for building adders. One uses $n-1$ ancilla qubits and was used to build (an equivalent version of) the first ripple-carry adder \cite{VBE96} as well as the first carry-lookahead adder \cite{DKR06}. The second structure does not require these ancilla qubits and is the basis for (an equivalent version of) the arguably most optimized ripple-carry adder in terms of several metrics \cite{TTK10} as well as the first adder to have sublinear depth and no ancilla qubits \cite{RV25}, proposed earlier this year.

Within these two structures, we find the $\ladder_2$ subroutine, which, as presented in the previous section, can be implemented in three different ways: with linear depth (Lemma~\ref{lem:LadLin}), polylogarithmic depth (Lemma~\ref{lem:LadPolylog}) and logarithmic depth (Lemma~\ref{lem:LadLog}). For example, by taking the structure that requires a lot of ancilla qubits and implementing Toffoli ladders with the logarithmic depth construction, we obtain Draper $\etal$'s carry-lookahead adder \cite{DKR06}.

The various resulting combinations of ‘adder structure / ladder implementation’ are given in Table~\ref{table:Adders}. Four of the six have already been proposed in the literature (or equivalent versions). Two are new, with one being of particular interest. We examine it in this section.

\begin{table}[H]
    \centering
    \caption{Source of the different adders obtained by combining one of the two adder structures with an implementation for the Toffoli ladders.}\label{table:Adders}
    \begin{tabular}{|c||c|c|c|} \hline
                                    & Lemma~\ref{lem:LadLin}    & Lemma~\ref{lem:LadPolylog}    & Lemma~\ref{lem:LadLog}        \\ \hline \hline
        Algorithm~\ref{alg:main1}   & $\approx$ \cite{VBE96}    & Remark~\ref{prop:1}      & \cite{DKR06}                  \\ \hline
        Algorithm~\ref{alg:main2}   & $\approx$ \cite{TTK10}    & \cite{RV25}                   & Theorem~\ref{prop:2}      \\ \hline
    \end{tabular}
\end{table}

\subsection{The "original" structure}

Whether it is the first ripple-carry adder \cite{VBE96} or the first carry-lookahead adder \cite{DKR06}, both have the same structure, using a linear number of ancilla qubits by default. This structure is provided by Algorithm~\ref{alg:main1}. The only subroutine that can be implemented in different ways is $\ladder_2$.

\begin{algorithm}[H]
	\caption{Structure for implementing $\Add_n$ using ancilla qubits}\label{alg:main1}
	\begin{algorithmic}[1]
		\Require $\ket{a}_{A} \ket{b}_{B} \ket{0^{\otimes (n-1)}, z}_{C}$ where $a,b \in \ints{0,2^n - 1}$ and $z \in \{0,1\}$ is stored in the last qubit in the ancillary register $C$.
		\Ensure $\ket{a}_{A} \ket{a + b \mod 2^n}_{B} \ket{0^{\otimes (n-1)}, z \oplus (a + b)_n}_{C}$.
		
		\For{$i = 0$ to $n - 1$} \Comment{Slice 1}
    		\State $\CCNOT(A_i, B_i, C_i)$ 
    		\State $\CNOT(A_i, B_i)$ 
    	\EndFor
    	
    	\State $\Rbra{\Lg_2^{(n-1)}}^\dag(C,B[1 \colon])$ \Comment{Slice 2}
		
		\For{$i = 0$ to $n - 2$} \Comment{Slice 3}
			\State $\CNOT(C_{i}, B_{i + 1})$
                \State $\CNOT(A_{i}, B_{i})$
                \State $\NOT(B_{i})$
		\EndFor
		
		\State $\Lg_2^{(n-2)}(C[\colon -1],B[1 \colon\colon -1])$ \Comment{Slice 4}
		
		\For{$i = 0$ to $n - 2$} \Comment{Slice 5}
			\State $\CNOT(A_i, B_i)$
			\State $\CCNOT(A_i, B_i, C_i)$
                \State $\NOT(B_i)$
		\EndFor
	\end{algorithmic}
\end{algorithm}

When the logarithmic depth implementation for $\ladder_2$ is used, we fall directly back on the adder of Draper $\etal$ \cite{DKR06}. To find the adder of Vedral $\etal$ \cite{VBE96} when the linear depth implementation is used instead, the equality given in Figure~\ref{fig:VBE} is needed.

\begin{figure}[H]
    \centering
    \scalebox{0.7}{
    \begin{quantikz}[column sep=.3cm, row sep={.7cm,between origins}]
        \lstick{$C_{i-1}$} & \ctrl{2}  &           &           &           & \ctrl{2}  & \\
        \lstick{$A_i$} &           & \ctrl{1}  & \ctrl{1}  & \ctrl{1}  &           & \\
        \lstick{$B_i$} & \ctrl{1}  & \targ{}   & \ctrl{1}  & \targ{}   & \targ{}   & \\
        \lstick{$C_i$} & \targ{}   &           & \targ{}   &           &           &
    \end{quantikz}
    =
    \begin{quantikz}[column sep=.3cm, row sep={.7cm,between origins}]
        & \ctrl{2}  &           &           & \ctrl{2}  &           &           &           & \\
        &           & \ctrl{1}  &           &           & \ctrl{1}  & \ctrl{1}  &           & \\
        & \targ{}   & \targ{}   & \targ{}   & \ctrl{1}  & \targ{}   & \ctrl{1}  & \targ{}   & \\
        &           &           &           & \targ{}   &           & \targ{}   &           &
    \end{quantikz}}
    \caption{On the left, the subroutine used in Vedral $\etal$'s paper. On the right, the equivalent circuit used in the structure described in Algorithm~\ref{alg:main1}.}\label{fig:VBE}
\end{figure}
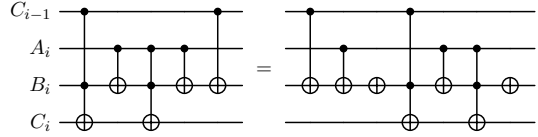

Finally, we can use the polylogarithmic depth implementation of $\ladder_2$ (Lemma~\ref{lem:LadPolylog}) and obtain a new adder. However, it does not have any particularly interesting properties considering the state-of-the-art, and we mention it for the sake of completeness in Remark~\ref{prop:1}.

\begin{remark}\label{prop:1}
    There exists a circuit implementing the operator $\Add_n$ with $n-1$ ancilla qubits, that has a Toffoli count of $\bigO{n \log n}$ and a Toffoli depth of $\bigO{\log^2 n}$.
\end{remark}

\subsection{The space optimized structure}

 Takahashi $\etal$ \cite{TTK10} implicitly used another structure to propose a ripple-carry adder (and therefore used $\ladder_2$'s naive implementation) which does not use ancilla qubits. It was also recently adopted by Remaud and Vandaele \cite{RV25} with the polylogarithmic implementation for $\ladder_2$. This structure is provided by Algorithm ~\ref{alg:main2}.

\begin{algorithm}[H]
    \caption{Structure for implementing $\Add_n$ using no ancilla qubit}\label{alg:main2}
    \begin{algorithmic}[1]
        \Require $\ket{a}_{A} \ket{b}_{B} \ket{z}_{Z}$ where $a,b \in \ints{0,2^n - 1}$ are respectively stored in the registers $A$ and $B$, and $z \in \{0,1\}$ is stored in a qubit $Z$.
        \Ensure $\ket{a}_{A} \ket{a + b \mod 2^n}_{B} \ket{z \oplus (a + b)_n}_{Z}$.
        
        \For{$i = 1$ to $n - 1$} \Comment{Slice 1}
            \State $\CNOT(A_i, B_i)$ 
        \EndFor
    
        \State $\Lg_1^{(n-1)}(A[1 \colon],Z)$ \Comment{Slice 2}
        
        \State $\Rbra{\Lg_2^{(n)}}^\dag(A,Z,B)$ \Comment{Slice 3}
        
        \For{$i = 1$ to $n - 1$} \Comment{Slice 4}
            \State $\CNOT(A_i, B_i)$ 
        \EndFor
    
        \For{$i = 1$ to $n - 2$}
            \State $\Xg(B_i)$ 
        \EndFor
        
        \State $\Lg_2^{(n-1)}(A,B[\colon -1])$ \Comment{Slice 5}

        \State $\Rbra{\Lg_1^{(n-2)}}^\dag(A[1 \colon])$ \Comment{Slice 6}
        
        \For{$i = 0$ to $n - 1$} \Comment{Slice 7}
            \State $\CNOT(A_i, B_i)$
        \EndFor
        \For{$i = 1$ to $n - 2$}
            \State $\Xg(B_i)$ 
        \EndFor
    \end{algorithmic}
\end{algorithm}

Here, we use $\ladder_1$ and $\ladder_2$. The first can be implemented naively with linear $\CNOT$-depth, or it can be implemented with a linear number of $\CNOT$ gates in logarithmic $\CNOT$ depth, as stated in the following Lemma.

\begin{lemma}[Lemma 2 in \cite{RV25}]\label{lem:CNOTladder}
    Let $n \ge 2$ be an integer. The operator $\Lg_1^{(n)}$ can be implemented with a $\CNOT$-depth of $2 \log n + \Theta(1)$ and a $\CNOT$-count of $2n + \Theta(1)$.
\end{lemma}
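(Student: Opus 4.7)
The plan is to reduce $\Lg_1^{(n)}$ to a prefix-XOR computation and then invoke a parallel-prefix construction. Define the prefix-XOR operator $P_n$ as the identity on $\ket{x_0}$ together with $\ket{x_i} \mapsto \ket{x_0 \oplus x_1 \oplus \cdots \oplus x_i}$ for $i \geq 1$. A direct telescoping check gives $\Lg_1^{(n)} P_n = P_n \Lg_1^{(n)} = I$, so $\Lg_1^{(n)} = P_n^{-1} = P_n^\dag$. Since every $\CNOT$ is self-inverse, reversing the gate order of any $\CNOT$ circuit implementing $P_n$ with depth $\delta$ and gate count $g$ yields a $\CNOT$ circuit for $\Lg_1^{(n)}$ with the same depth and count. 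It therefore suffices to build a $\CNOT$ circuit for $P_n$ meeting the announced bounds.

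For this I would adapt the classical Brent--Kung parallel-prefix scheme to $\CNOT$ gates. It splits into two phases. In the upsweep, $\ceil{\log n}$ layers of pairwise-disjoint $\CNOT$s accumulate XOR sums along a balanced binary tree, so that after the upsweep the qubit at position $2^k m - 1$ holds the XOR of the $x_j$ over a dyadic block of length $2^k$. The downsweep then applies $\ceil{\log n}$ further layers of disjoint $\CNOT$s, traversing the tree from the root back toward the leaves, which redistribute these block-XORs so as to fill in every prefix. Each phase contributes depth $\ceil{\log n}$, and a telescoping sum over the levels gives at most $n + O(1)$ gates per phase, yielding total depth $2 \log n + \Theta(1)$ and count $2n + \Theta(1)$.

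The main technical obstacle is the careful bookkeeping needed when $n$ is not a power of two: one has to verify that the partial XORs produced by the upsweep are exactly the inputs the downsweep expects to read, and that the boundary layers neither introduce spurious gates nor omit required ones. This verification only perturbs the constants hidden in the $\Theta(1)$ terms, and the complete argument is carried out in detail in the proof of Lemma~2 of \cite{RV25}, to which we refer for the full construction.
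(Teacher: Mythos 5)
The paper gives no proof of this lemma at all --- it is imported verbatim by citation from Lemma~2 of \cite{RV25} --- so there is nothing internal to compare against; your proposal is a genuine proof sketch where the paper has only a reference. Your two key steps are both sound: the identity $\Lg_1^{(n)} = P_n^{-1}$ follows from the telescoping $s_i \oplus s_{i-1} = x_i$ (equivalently, $\Lg_1^{(n)}$ is the bidiagonal matrix $I+N$ over $\mathbb{F}_2$ whose inverse $I + N + N^2 + \cdots$ is the lower-triangular prefix-sum matrix), and reversing a $\CNOT$ circuit for $P_n$ gate-by-gate does implement $P_n^\dag$ at identical depth and count. The in-place Brent--Kung upsweep/downsweep with disjoint $\CNOT$s per layer indeed achieves depth $2\ceil{\log n} + \Theta(1)$ with at most $2n$ gates, which is essentially the same divide-and-conquer construction used in \cite{RV25}; the only cosmetic point is that the resulting count is $2n - \Theta(\log n)$, which satisfies the stated upper bound but is written a little loosely as ``$2n+\Theta(1)$''.
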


When the polylogarithmic depth implementation for $\ladder_2$ is used, we fall directly back on the adder of Remaud and Vandaele \cite{RV25}. To find the adder of Takahashi $\etal$ \cite{TTK10} when the linear depth implementation is used instead, the equality given in Figure~\ref{fig:TTK} is needed.

\begin{figure}[H]
    \centering
    \scalebox{0.7}{
    \begin{quantikz}[column sep=.3cm, row sep={.7cm,between origins}]
        \lstick{$A_i$}      & \ctrl{1}  & \ctrl{1}  & \\
        \lstick{$B_i$}      & \ctrl{1}  & \targ{}   & \\
        \lstick{$A_{i+1}$}  & \targ{}   &           &
    \end{quantikz}
    =
    \begin{quantikz}[column sep=.3cm, row sep={.7cm,between origins}]
        & \ctrl{1}  &           & \ctrl{1}  &          & \\
        & \targ{}   & \targ{}   & \ctrl{1}  & \targ{}  & \\
        &           &           & \targ{}   &          &
    \end{quantikz}}
    \caption{On the left, the subroutine used in Takahashi $\etal$'s paper. On the right, the equivalent circuit used in the structure described in Algorithm~\ref{alg:main2}.}\label{fig:TTK}
\end{figure}
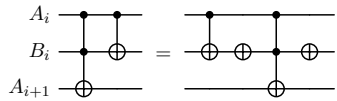

Finally, we can use the logarithmic depth implementation of $\ladder_2$ (Lemma~\ref{lem:LadLog}) and obtain a new adder. We state its properties in Theorem~\ref{prop:2}.

\begin{theorem}\label{prop:2}
    There exists a circuit implementing the operator $\Add_n$ with $n-\hw{n}-\floor{\log n}$ ancilla qubits, which has a Toffoli-count of $8n - \Theta(\log n)$ and a Toffoli-depth of $4 \log{n} + \Theta(1)$.
\end{theorem}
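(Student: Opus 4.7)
The plan is to observe that correctness of Algorithm~\ref{alg:main2} is independent of the choice of implementation for $\ladder_2$: the algorithm is specified in terms of the abstract operators $\Lg_1$, $\Lg_2$, $\CNOT$, and $\Xg$, and it is already known from \cite{TTK10,RV25} that this structure realizes $\Add_n$ whenever its subroutines are implemented correctly. So the theorem reduces to a resource analysis when we instantiate the two occurrences of $\ladder_2$ (in Slices 3 and 5) with the logarithmic-depth construction from Lemma~\ref{lem:LadLog}, and $\ladder_1$ with Lemma~\ref{lem:CNOTladder}. Note that Slice 3 uses the dagger of $\Lg_2^{(n)}$, which costs exactly the same as the forward version.

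For the gate counts, I would plug Lemma~\ref{lem:LadLog} into each of the two $\ladder_2$ calls, noting that $\Lg_2^{(n)}$ corresponds to parameter $n+1$ in Lemma~\ref{lem:LadLog}'s parameterization and $\Lg_2^{(n-1)}$ to parameter $n$. Summing the two Toffoli counts gives $\bigl(4(n+1) - 3\hw{n+1} - 3\floor{\log(n+1)} - 1\bigr) + \bigl(4n - 3\hw{n} - 3\floor{\log n} - 1\bigr) = 8n - \Theta(\log n)$, since $\hw{\cdot} = O(\log n)$; all other slices consist only of $\CNOT$ and $\Xg$ gates (Slices 2 and 6 use $\ladder_1$, which by Lemma~\ref{lem:CNOTladder} is $\CNOT$-only), so they contribute no Toffoli. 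For the Toffoli depth, the two $\ladder_2$ calls sit in separate slices and therefore run sequentially; each has depth $\floor{\log(n{+}1)} + \floor{\log\tfrac{n+1}{3}} + 3$ or $\floor{\log n} + \floor{\log\tfrac{n}{3}} + 3$, and summing yields $4\log n + \Theta(1)$, while all other slices contribute nothing to Toffoli depth.

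For the ancilla count, the key observation is that the two $\ladder_2$ calls are sequential, so their workspace registers can be shared; the maximum required comes from the larger call $\Lg_2^{(n)}$, giving $n + 1 - \hw{n+1} - \floor{\log(n+1)} = n - \hw{n} - \floor{\log n} + \Theta(1)$ ancillas, which matches the stated count up to additive $\Theta(1)$ absorbed into the $\Theta(\log n)$ error. There is no real conceptual obstacle here; the main work is bookkeeping to check that the $\pm 1$ shifts between $n$, $n-1$, and $n+1$, together with the $\hw{\cdot}$ terms, indeed collapse into the advertised $\Theta(\log n)$ and $\Theta(1)$ corrections, and that the ancilla register is genuinely free at the boundary between Slices 3 and 5 so that it can be reused.
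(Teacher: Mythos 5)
Your proposal matches the paper's proof: the paper likewise takes correctness of Algorithm~\ref{alg:main2} for granted (inherited from \cite{TTK10,RV25}) and simply instantiates the two $\ladder_2$ calls with Lemma~\ref{lem:LadLog} and the $\ladder_1$ calls with Lemma~\ref{lem:CNOTladder}, summing the sequential depths and counts to obtain $8n-\Theta(\log n)$ Toffoli gates and Toffoli-depth $4\log n+\Theta(1)$. If anything you are more careful than the paper, which does not spell out the $n$ versus $n+1$ parameter shift for the Slice-3 ladder or the reuse of the ancilla register between Slices 3 and 5.
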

\begin{proof}
    Slices 1, 4 and 7 are implemented in constant $\CNOT$-depth with a total of $3n+\Theta(1)$ $\CNOT$ gates. Slices 2 and 6 are implemented using Lemma~\ref{lem:CNOTladder}, $\ie$, with a total of $4n+\Theta(1)$ $\CNOT$ gates and a $\CNOT$-depth of $4 \log n + \Theta(1)$.
    Finally, the $\ladder_2$ operators in Slices 3 and 6 are implemented using Lemma~\ref{lem:LadLog}, $\ie$, with a total of $8n - \bigO{\log n}$ Toffoli gates and a Toffoli-depth of $4 \log n + \Theta(1)$, at the expense of using $n-\hw{n}-\floor{\log n}$ ancilla qubits.
\end{proof}

An example of circuit for $n=8$ is provided in Figure~\ref{fig:newadder}. 

\begin{figure*}
    \centering
    \scalebox{0.7}{\input{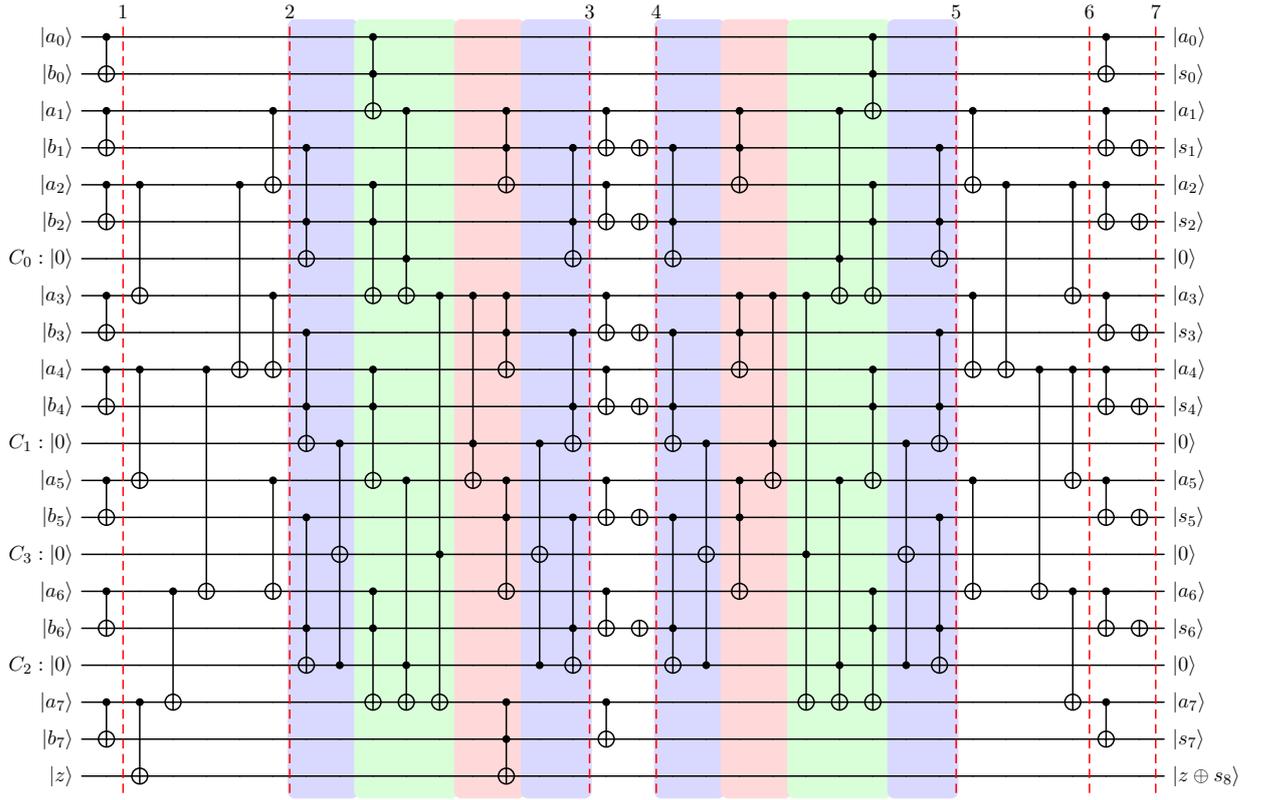}}
    \caption{Example of circuit generated by Algorithm~\ref{alg:main2} with $\ladder_2$ implemented in logarithmic depth (Lemma~\ref{lem:LadLog}) for $n=8$.}
    \label{fig:newadder}
\end{figure*}
    \section{Conclusion}\label{sec:conclusion}

We have explicitly shown how several quantum adders based on the ripple-carry and carry-lookahead techniques can be linked. This link is the Toffoli ladder subroutine and optimizing its implementation effectively optimizes the implementation of quantum adders.

In addition, we have proposed a new carry-lookahead adder that has more interesting properties than its predecessors in the same family.

\subsection*{Acknowledgments}

The author would like to thank Vivien Vandaele for valuable discussions.

This work is part of HQI initiative (www.hqi.fr) and is supported by France 2030 under the French National Research Agency award number “ANR-22-PNCQ-0002”.
    
    \bibliographystyle{splncs04}
    \bibliography{paper}

\end{document}